\newtheorem{theorem}{Theorem}[section]
\newtheorem{lemma}[theorem]{Lemma}
\theoremstyle{definition}
\newtheorem{definition}[theorem]{Definition}
\newtheorem{example}[theorem]{Example}
\theoremstyle{remark}
\newtheorem{remark}[theorem]{Remark}
\numberwithin{equation}{section}
\DeclareMathOperator{\Map}{Map}
\DeclareMathOperator{\Id}{Id}
\renewcommand{\vec}{\overrightarrow}
\begin{document}

\title{Simplicial models for concurrency}

\author{Peter Bubenik}
\address{Department of Mathematics, Cleveland State University, 2121 Euclid Ave RT 1515, Cleveland, OH 44115, USA}
\curraddr{}
\email{p.bubenik@csuohio.edu}
\thanks{}

\subjclass[2010]{68Q85, 55U10, 18D20, 55U35}

\keywords{Concurrency; Simplicial set; Simplicial category; Necklace}

\date{\today}


\begin{abstract}
We model both concurrent programs and the possible executions from one state to another in a concurrent program using simplices. The latter are calculated using necklaces of simplices in the former. 
\end{abstract}

\maketitle

\section{Introduction}
\label{sec:introduction}

We develop a discrete model for processes that is useful in the concurrent setting. It is based on the traditional model given by graphs. In this model, vertices of the graph represent states and edges represent transitions. The edges may be directed or undirected. 

A more sophisticated model appropriate for concurrency is given by simplicial complexes. For example, the following triangle,
\begin{center}
\begin{tikzpicture}[thick]
  \filldraw [fill=lightgray] (0,0) node [anchor=north east]
  {initial} -- node [anchor=north] {$a$} (1,0) -- node [anchor=west]
  {$b$} (1,1) node [anchor = south west] {final} -- node [anchor=south
  east] {$\{a,b\}$} (0,0);
  \pgfsetarrowsend{latex}
  \draw (0,0) -- (1,0);
  \draw (0,0) -- (1,1);
  \draw (1,0) -- (1,1);
\end{tikzpicture}
\end{center}
models two processes $a$ and $b$ where the boundary models two executions, either $a$ followed by $b$, or $a$ and $b$ concurrently. The interior of the triangle models intermediate executions. Adding another triangle we obtain a simplicial complex that models all possible executions of $a$ and $b$. 
\begin{center}
\begin{tikzpicture}[thick]
  \filldraw [fill=lightgray] (0,0) node [anchor=north east] {initial} -- node [anchor=north] {$a$} (1,0) -- node [anchor=west] {$b$} (1,1) node [anchor = south west] {final} -- (0,0);
  \filldraw[fill=lightgray,very thick] (0,0) -- node [anchor=east]
  {$b$} (0,1) -- node [anchor=south] {$a$} (1,1) -- (0,0);
  \pgfsetarrowsend{latex}
  \draw (0,0) -- (1,0);
  \draw (0,0) -- (1,1);
  \draw (1,0) -- (1,1);
  \draw (0,0) -- (0,1);
  \draw (0,1) -- (1,1);
\end{tikzpicture}
\end{center}
For three processes, the analogous model is a cube that is subdivided into six tetrahedra.

The combinatorial versions of simplicial complexes are simplicial sets. These consist of sets of abstract vertices, edges, triangles, tetrahedra, and higher dimensional simplices. For a simplicial set, the corresponding simplicial complex is its geometric realization. Our figures will consist of simplicial complexes, but we intend these to represent the corresponding simplicial set.

Using Dijkstra's formalism for concurrent programs~\cite{dijkstra:book}, we
give an explicit construction of a simplicial model for a concurrent
program using filtered simplicial sets.
The vertices of this simplicial model give states of the concurrent program. 
The executions between states are modeled by paths between vertices in
the simplicial model.
There is a growing literature of related models, which are mostly
continuous. For some examples,
see~\cite{frg:atc,grandis:dht1,bubenik:vanKampen,bubenikWorytkiewicz:modelCategoryForLPS,krishnan:streams,kahl:reducingCubical,fghr:components}.

In concurrency, understanding the possible execution paths is one of
the main sources of difficulty.  In a simplicial set, the equivalence
classes of paths can be described by a category called the fundamental
category~\cite{joyal:qcatAndKan} or the path
category~\cite{jardine:pathCategories}. In this category the objects
are the vertices of the simplicial set, and between objects there is a
set of equivalence classes of paths. 

We consider a related but more sophisticated construction in which the
objects are vertices of the simplicial set, but the executions from
one state to another are described by a simplicial set. That is, our
model is a simplicial category (i.e., a category enriched over
simplicial sets). This construction can detect higher order structure
that is invisible to the path category. To calculate it, we consider
necklaces of simplices in our simplicial set models. In a related
paper~\cite{raussen:simplicialModels}, Raussen constructs simplicial
models of execution spaces for continuous models.

\emph{Outline of the paper:} In Section~\ref{sec:machinery}, we discuss simplicial sets, filtered simplicial sets, necklaces, simplicial categories, and some of the other necessary mathematical machinery. In Section~\ref{sec:concurrency}, we apply these constructions to construct simplicial models of concurrent systems and also of their execution spaces. Our examples include an example in which the structure of the executions is not detected by the path category, but is captured by our methods. 
In Section~\ref{sec:futureWork}, we discuss a suitable model structure for this setting and we remark that one may try to apply these methods using cubical sets instead of simplicial sets.

\section{Mathematical machinery}
\label{sec:machinery}

\subsection{Simplicial sets}
\label{sec:simplicial-sets}

A \emph{simplicial set}, $X$, consists of a sequence of sets, $\{X_0, X_1, X_2,
\ldots\}$ together with \emph{face} maps $d_i: X_k \to X_{k-1}$ for $0 \leq i
\leq k$ and \emph{degeneracy} maps $s_i: X_k \to X_{k+1}$ for $0 \leq i \leq
k$ satisfying the following \emph{simplicial identities}. 
\begin{align*}
d_id_j& = d_{j-1}d_i \text{ if } i<j\\
s_is_j& = s_{j+1}s_i  \text{ if } i\leq j\\
d_is_j& = s_{j-1}d_i  \text{ if } i<j\\
d_js_j& = \Id = d_{j+1}s_j\\
d_is_j& = s_jd_{i-1}  \text{ if } i>j+1
\end{align*}
The elements of $X_k$ are
called \emph{$k$-simplices}.
The $0$-simplices are called \emph{vertices} and the $1$-simplices are called \emph{edges}. A morphism of simplicial sets, $f:X \to Y$ consists of a sequence of functions $f_k:X_k \to Y_k$ that commute with the face and degeneracy maps. The geometric realization of a simplicial set is a simplicial complex.

For example, for each $n$ there is a simplicial set $\Delta^n$ whose
geometric realization is the standard geometric
$n$-simplex. $\Delta^0$ is the simplicial set with one element in each
$X_k$ and all of the face and degeneracy maps given by the
identity. The $k$-simplices that are not in the image of a degeneracy
map are called \emph{nondegenerate}. Thus, $\Delta^0$ has only one nondegenerate simplex, which is a $0$-simplex. The only nondegenerate simplices in $\Delta^1$ are $a,b \in \Delta^1_0$ and $e \in \Delta^1_1$ where $d_0(e)=b$ and $d_1(e)=a$.
Note that the initial vertex is $d_1(e)$ and the final vertex is $d_0(e)$.

Simplicial sets and their morphisms form a category. Geometric
realization is a functor from this category to the category of
topological spaces.  A more elegant but more sophisticated way of
defining simplicial sets is as contravariant functors from the
category of finite ordinals and order preserving maps to the category
of sets.


\subsection{Filtered simplicial sets}
\label{sec:filt-simpl-sets}

A sub-simplicial set $A \subseteq X$ is a sequence of subsets $A_i
\subseteq X_i$ that are closed under the restrictions of the face and
degeneracy maps.
A \emph{filtration} on a simplicial set $X$ is an increasing sequence
$X_{(0)} \subseteq X_{(1)} \subseteq X_{(2)} \subseteq \ldots$ of 
sub-simplicial sets of $X$. The simplices in $X_{(d)}$ are said to be
in degree $d$.
A \emph{$m$-filtration} on $X$ is a collection of sub-simplicial sets
$X_{(i_1,\ldots,i_m)}$ of $X$ such that if $i_j \leq i'_j$ for $1\leq j \leq m$
then $X_{(i_1,\ldots,i_m)}$ is a sub-simplicial set of $X_{(i'_1,\ldots,i'_m)}$.

For simplicial sets $X$ and $Y$, their product $X \times Y$ is given by $(X\times Y)_n
= X_n \times Y_n$ with face and degeneracy maps $d_i = d_i \times d_i$
and $s_i = s_i \times s_i$. Note that the nondegenerate simplices of
$X \times Y$ are not necessarily products of nondegenerate simplices
of $X$ and $Y$. A standard example is $\Delta^1 \times \Delta^1$ which
is a triangulation of the square.
For filtered simplicial sets $X$ and $Y$, $X \times Y$ has an induced
filtration where  $(X \times Y)_{(k)}$ is the union of the
sub-simplicial sets $X_{(i)} \times
Y_{(j)}$ where $i+j = k$.
Similarly, products of $m$-filtered simplicial sets have an induced $m$-filtration.

\subsection{The path category of a simplicial set}
\label{sec:path-categ-simpl}

Given a simplicial set $X$, we can define the {path category}, $P(X)$,
as follows. For a thorough exposition
see~\cite{jardine:pathCategories}.
The path category is also called the fundamental category.

\begin{definition}
  Let the \emph{path category} $P(X)$ of the simplicial set $X$ be the category whose objects are the vertices of $X$ and whose morphisms are concatenations of edges in $X$ modulo the equivalence relation generated by the following relations
\begin{gather*}
  s_0(a) \sim \Id_a, \quad \text{for all } a \in X_0, \text{ and}\\
  d_1(t) \sim d_2(t)d_0(t), \quad \text{for all } t \in X_2.
\end{gather*}
\end{definition}

\subsection{The simplicial category of an ordered simplicial set}
\label{sec:simpl-categ}

Since the path category only depends on the $k$-simplices for $k\leq 2$, it does not detect higher order structure. 

For this purpose, we consider the following construction of a category
enriched over simplicial sets. In such a category, between two objects,
instead of set of morphisms, we have a simplicial set of
morphisms. For brevity, this is also called a simplicially enriched
category, or a simplicial category.
For a thorough exposition of the simplicial category of a simplicial
set see~\cite{duggerSpivak:rigidification}.

A \emph{necklace} is a simplicial set, $T$, of the form 
\begin{equation*}
  \Delta^{n_1} \vee \Delta^{n_2} \vee \ldots \vee \Delta^{n_k}
\end{equation*}
where the final vertex of $\Delta^{n_i}$ is glued to the initial vertex of $\Delta^{n_{i+1}}$.
The simplicial set $\Delta^{n_i}$ is called a \emph{bead} of $T$. Vertices of $T$ that are initial or final vertices of any of the beads are called \emph{joints} of $T$. The set of joints of $T$ is denoted $J_T$. Call the initial vertex of $\Delta^{n_1}$ the initial vertex of $T$ and the final vertex of $\Delta^{n_k}$ the final vertex of $T$.

A \emph{flag} of $T$ is an increasing sequence $\vec{T} = (T^0 \subseteq \cdots \subseteq T^n)$ of subsets of vertices of $T$. The \emph{length} of $\vec{T}$ is $n$. A flag is \emph{flanked} if $T^0 = J_T$ and $T^n = T_0$.

A simplicial set $X$ is \emph{ordered} if for each $a \in X_0$,
$P(X)(a,a)$ consists of only the identity morphism, and no two
simplices have the same set of vertices.  The simplicial sets arising
in our models in Section~\ref{sec:concurrency} will all be
ordered. One can define the simplicial category for a simplicial set
without this condition, but for ordered simplicial sets we have the
following nice explicit construction
from~\cite{duggerSpivak:rigidification}.

\begin{definition}
  For an ordered simplicial set $X$, let $\mathfrak{C}(X)$ be the category enriched in simplicial sets whose objects are vertices in $X$, and such that for vertices $a$ and $b$, $\mathfrak{C}(X)(a,b)$ is the simplicial set given as follows.
Let $\mathfrak{C}(X)(a,b)_n$ be the set of triples $(T,f,\vec{T})$ where $T$ is a necklace, $f:T \to X$ is a injective map of simplicial sets such that the images of the initial and final vertices of $T$ are $a$ and $b$ respectively, and $\vec{T}$ is a flanked flag of length $n$.
The degeneracy maps, $s_i$, for $0\leq i \leq n$ are given by 
\begin{equation*}
  s_i((T,f,T^0 \subseteq \cdots \subseteq T^n)) = (T,f,T^0 \subseteq \cdots \subseteq T^i \subseteq T^i \subseteq \cdots T^n).
\end{equation*}
The face maps, $d_i$, for $0 < i < n$ are given by 
\begin{equation*}
  d_i((T,f,T^0 \subseteq \cdots \subseteq T^n)) = (T,f,T^0 \subseteq \cdots \subseteq T^{i-1} \subseteq T^{i+1} \subseteq \cdots T^n).
\end{equation*}
For $i=0$, let $T'$ the unique subnecklace of $T$ whose set of joints is $T^1$ and whose set of vertices is $T^n$.
Similarly, for $i=n$, let $T''$ be the unique subnecklace of $T$ whose set of joints is $T^0$, and whose set of vertices is $T^{n-1}$. 
Then
\begin{gather*}
  d_0((T,f,T^0 \subseteq \cdots \subseteq T^n)) = (T',f,T^1 \subseteq \cdots \subseteq T^n), \text{ and}\\
  d_n((T,f,T^0 \subseteq \cdots \subseteq T^n)) = (T'',f,T^0 \subseteq \cdots \subseteq T^{n-1}).
\end{gather*}
\end{definition}

\section{Concurrency}
\label{sec:concurrency}

We use Dijkstra's abstraction of concurrent programming in which \emph{semaphores} are used to control access by multiple processes to a common resource ~\cite{dijkstra:book}. More precisely, a resource that may be used by only $k$ processes at once is controlled by a $k$-semaphore. This is simply a nonnegative counter which starts at $k$. When a process wants to use the resource it tries to decrement the counter and when it is finished using the resource it increments the counter. $1$-semaphores are also called \emph{binary semaphores}.

A concurrent program can be abstractly written as a sequence of
operations on semaphores. Following Dijkstra's original notation, for
a semaphore $a$, let $Pa$ denote decrementing $a$ and let $Va$
denote incrementing $a$.
We consider programs that are finite sequences of these operations.
Abstractly a program for one process is given by
\begin{equation} \label{eq:PVprogram}
  P = O_1a_{i_1}.O_2a_{i_2}.\cdots.O_Na_{i_N},
\end{equation}
where $O_j$ is either operator $P$ or $V$ and $a_{i_j}$ is one of the semaphores. Let $s_0$ be the initial state of the program $P$ and let $s_j$ be state of the program after the operation $O_ja_{i_j}$.

\subsection{Simplicial models of concurrent programs}
\label{sec:simplicialModels}

\begin{definition} \label{def:modelProcess}
The \emph{simplicial model} for the program $P$ for a single process
in~\eqref{eq:PVprogram} is modeled by the simplicial set $X$ that is the
necklace of $N$ $1$-simplices.
The vertices of the necklace correspond to states of the program.
If there are $m$ shared resources $a_1, \ldots, a_m$ we define an $m$-filtration on this
necklace. The initial vertex has degree $(0,\ldots, 0)$. 
The operation $Pa_i$ increases by one the $a_i$ degree of the corresponding edge and the
subsequent vertices and edges in the necklace. 
The operation $Va_i$ decreases by one the $a_i$ degree of the vertices and
edges in the necklace following the corresponding edge.
\end{definition}

\begin{example} \label{example:swissA}
  Let $a$ and $b$ be two binary semaphores. Consider the program
  \begin{equation*}
    Pa.Pb.Va.Vb
  \end{equation*}
  This program can be modeled by the following bifiltered simplicial set.
 
  \begin{center}
   \begin{tikzpicture}[thick]
      \pgfsetarrows{latex-latex}
      \pgfsetarrowsstart{}
   \foreach \x in {0,4}
         \fill (\x,0) circle (2pt);
   \foreach \x in {1}
         \fill (\x,0) circle (2pt) node[below=3pt] {$a$};
   \foreach \x in {3}
         \fill (\x,0) circle (2pt) node[below] {$b$};
   \foreach \x in {2}
         \fill (\x,0) circle (2pt) node[ below] {$ab$};
   \foreach \x in {0}
       \draw (\x,0) -- node[below=3pt] {$a$} (\x+1,0) ;
   \foreach \x in {3}
       \draw (\x,0) -- node[below] {$b$} (\x+1,0) ;
   \foreach \x in {1,2}
       \draw (\x,0) -- node[below] {$ab$} (\x+1,0) ;
 \end{tikzpicture}
\end{center}
Here the unlabeled vertices and edges are in degree $(0,0)$. The
vertices and edges labeled $a$, $b$ and $ab$ are in degree
$(1,0)$, $(0,1)$ and $(1,1)$ respectively.
\end{example}

Assume programs $P_1$, \ldots, $P_n$ with shared resources $a_1,\ldots,a_m$ have as simplicial models the $m$-filtered simplicial sets $X_1$, \ldots $X_n$. 
We want a model for the concurrent program $P = (P_1 | \cdots | P_n)$.
The simplicial set $X_1 \times  \cdots \times X_n$ has an induced $m$-filtration (see Section~\ref{sec:filt-simpl-sets}). 
A \emph{state} of $P$ is a vertex $x = (x_1,\ldots,x_n) \in X_1 \times
\cdots \times X_n$.

\begin{definition} \label{def:validState}
  We call a state $x = (x_1,\ldots,x_n)$ of $P$ \emph{valid} if across
  $1\leq i \leq n$ the uses of resources by $P_i$ at $s_{j_i}$ are
  compatible, where $s_{j_i}$ is the state of $P_i$ corresponding to the vertex $x_i$ of  
  the model $X_i$.
  More precisely, for $1\leq i \leq n$, $1 \leq j \leq m$ and $x_i \in
  P_i$, let $D_{ij}(x_i)$ be the amount that $a_j$ has decreased from
  $k_j$ if the program for the process $P_i$ runs from its initial
  state to the state $x_i$.
  A state $x=(x_1,\ldots,x_n)$ is \emph{valid} if for each $1\leq j
  \leq m$
  \begin{equation*}
    \sum_{i=1}^n D_{ij}(x_i) \leq k_j.
  \end{equation*}
\end{definition}

\begin{lemma}
  The set of valid states of $P$ is given by the vertices in $(X_1 \times \cdots \times X_n)_{(k_1,\ldots,k_m)}$.
\end{lemma}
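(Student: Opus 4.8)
The plan is to decompose the statement into two independent facts and then combine them. The first is that for each single process $P_i$ the integer $D_{ij}(x_i)$ of Definition~\ref{def:validState} coincides with the $a_j$-degree of the vertex $x_i$ in the $m$-filtered model $X_i$. The second is that the induced $m$-filtration on the product records, coordinate by coordinate, the sum of the degrees of the factors, so that a product simplex lies in degree $(k_1,\ldots,k_m)$ exactly when each coordinate sum is bounded by the corresponding $k_j$. Since the lemma concerns vertices and every $0$-simplex of a product is a tuple of vertices, the two facts together will give the claim.

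For the first fact I would read the degree directly off the construction in Definition~\ref{def:modelProcess}. Writing $x_i = s_\ell$ for the state of $P_i$ reached after its first $\ell$ operations, an operation $P a_j$ performed at a step $r$ raises the $a_j$-degree of $s_\ell$ by one precisely when $r \le \ell$ (as $s_\ell$ is then among the vertices subsequent to the corresponding edge), while an operation $V a_j$ at step $r$ lowers it by one under the same condition $r \le \ell$ (as $s_\ell$ then follows the corresponding edge); steps with $r > \ell$ leave $s_\ell$ unchanged. Hence the $a_j$-degree of $x_i$ is the number of $P a_j$ operations minus the number of $V a_j$ operations used to reach $x_i$, which is exactly the net amount $D_{ij}(x_i)$ by which the counter $a_j$ has decreased from $k_j$.

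For the second fact I would unwind the induced filtration of Section~\ref{sec:filt-simpl-sets}. By iterating the binary formula $(X \times Y)_{(k)} = \bigcup_{i+j=k} X_{(i)} \times Y_{(j)}$ (with $i,j,k$ ranging over $m$-tuples added coordinatewise), a product simplex $(\sigma_1,\ldots,\sigma_n)$ lies in $(X_1 \times \cdots \times X_n)_{(k_1,\ldots,k_m)}$ if and only if there are tuples $d^1,\ldots,d^n$ with $\sigma_i \in (X_i)_{(d^i)}$ and $\sum_i d^i = (k_1,\ldots,k_m)$. Because the single-process model assigns each simplex an $m$-tuple of degrees, with $(X_i)_{(d^i)}$ consisting of the simplices whose degree is at most $d^i$ in every coordinate, the condition $\sigma_i \in (X_i)_{(d^i)}$ reads as: the $a_l$-degree of $\sigma_i$ is at most $d^i_l$ for each $l$. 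Summing over $i$ shows such tuples can exist only if the total $a_l$-degree is at most $k_l$ for each $l$, and conversely, when these bounds hold one produces the required $d^i$ by distributing the remaining slack in each coordinate independently among the factors.

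Combining the two facts for $0$-simplices, a state $x = (x_1,\ldots,x_n)$ lies in $(X_1 \times \cdots \times X_n)_{(k_1,\ldots,k_m)}$ if and only if $\sum_{i=1}^n D_{ij}(x_i) \le k_j$ for every $j$, which is the validity condition of Definition~\ref{def:validState}. I expect the main obstacle to be the second fact: correctly reducing the $n$-fold induced filtration to a single coordinatewise inequality, in particular checking that the bounds in the different coordinates $j$ can be met simultaneously by one choice of the splitting tuples $d^i$.
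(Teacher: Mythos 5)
Your proposal is correct and follows essentially the same route as the paper's proof: identify the $a_j$-degree of $x_i$ with $D_{ij}(x_i)$, observe that degrees add under the induced product filtration, and conclude. You simply supply more detail than the paper does (in particular, the careful unwinding of the union formula for the induced $m$-filtration and the slack-distribution argument, which the paper leaves implicit).
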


\begin{proof}
  Consider $x = (x_1, \ldots, x_n) \in X_1 \times \cdots \times X_n$. 
  Let $1\leq i \leq n$ and $1\leq j \leq m$.
  From the definition of the grading on $X_i$, $x_i$ has $a_j$ degree
  $D_{ij}(X_i)$.
  So the $a_j$ degree of $x$ is $\sum_{i=1}^n D_{ij}(x_i)$.
  Thus the degree of $x$ is 
 \begin{equation*}
    \left(\sum_{i=1}^n D_{i1}(x_i),\ldots,\sum_{i=1}^n
    D_{im}(x_i)\right).
  \end{equation*}
 Therefore $x$ is valid if and only if $x \in (X_1 \times \cdots
  \times X_n)_{(k_1,\ldots,k_m)}$.  
\end{proof}

An edge of $X_1 \times \cdots \times X_n$ is of the form $(e_1,\ldots,e_n)$ where $e_i$ is an edge of $X_i$. Note that $e_i$ may be degenerate. That is $e_i = sx_i$ where $x_i$ is a vertex of $X_i$.

\begin{definition} \label{def:validEdge}
  Define an edge $e = (e_1,\ldots,e_n)$ to be \emph{valid} if across $1\leq i \leq n$ the uses of resources by $P_i$ from $s_{j_i}$ to $s_{j'_i}$ are compatible, where $s_{j_i}$ and $s_{j'_i}$ are the states corresponding to $d_1(e_i)$ and $d_0(e_i)$, respectively. More precisely, using the notation of Definition~\ref{def:validState}, $e$ is valid if for each $1\leq j\leq m$,
  \begin{equation*}
    \sum_{i=1}^n \max\{D_{ij}(d_1(e_i)), D_{ij}(d_0(e_i))\} \leq k_j
  \end{equation*}
\end{definition}

\begin{lemma}
  The set of valid edges of $P$ is given by the edges in $(X_1 \times \cdots \times X_n)_{(k_1,\ldots,k_m)}$.
\end{lemma}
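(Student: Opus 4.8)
The plan is to mirror the proof of the preceding lemma on valid states, the only genuinely new ingredient being a computation of the $a_j$-degree of a single edge in each factor $X_i$. First I would establish the following claim: for an edge $e_i$ of $X_i$ and each $1 \le j \le m$, the $a_j$-degree of $e_i$ equals $\max\{D_{ij}(d_1(e_i)), D_{ij}(d_0(e_i))\}$. This is the heart of the argument, and I would verify it directly from the construction in Definition~\ref{def:modelProcess}, where degrees of edges are assigned explicitly.

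For the claim, recall that each nondegenerate edge of the necklace $X_i$ corresponds to exactly one operation $O_\ell a_{i_\ell}$. For any resource $a_j$ with $j \ne i_\ell$ the operation does not alter the $a_j$-counter, so $D_{ij}(d_1(e_i)) = D_{ij}(d_0(e_i))$ and the claimed maximum is trivially this common value, which matches the (unchanged) $a_j$-degree carried by $e_i$. For $j = i_\ell$ I would split into the two cases of Definition~\ref{def:modelProcess}. If the operation is $Pa_j$, it raises the $a_j$-degree of the edge together with all subsequent vertices, so the final vertex $d_0(e_i)$ carries the larger degree and the edge shares it, giving $a_j$-degree $= D_{ij}(d_0(e_i)) = \max$. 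If the operation is $Va_j$, the decrement applies only to the vertices and edges \emph{following} the edge, so the initial vertex $d_1(e_i)$ and the edge retain the larger degree while $d_0(e_i)$ is one smaller, giving $a_j$-degree $= D_{ij}(d_1(e_i)) = \max$. Finally, a degenerate edge $e_i = s x_i$ satisfies $d_0(e_i) = d_1(e_i) = x_i$ and inherits the degree of $x_i$, so the maximum again returns the correct value. This exhausts all cases and proves the claim.

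With the claim in hand I would conclude exactly as in the vertex lemma. By the description of the induced $m$-filtration on a product (Section~\ref{sec:filt-simpl-sets}), the minimal filtration degree of a product simplex is the sum of the degrees of its factors, componentwise in each resource. Hence the $a_j$-degree of an edge $e = (e_1, \ldots, e_n)$ of $X_1 \times \cdots \times X_n$ is $\sum_{i=1}^n \max\{D_{ij}(d_1(e_i)), D_{ij}(d_0(e_i))\}$. Therefore $e$ lies in $(X_1 \times \cdots \times X_n)_{(k_1, \ldots, k_m)}$ if and only if this sum is at most $k_j$ for every $1 \le j \le m$, which is precisely the validity condition of Definition~\ref{def:validEdge}.

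I expect the main obstacle to be the $Va_j$ case of the claim, where one must check that the edge itself keeps the higher, pre-decrement degree rather than the degree of its final vertex. Getting this bookkeeping right is exactly what forces the \emph{maximum} over the two endpoints (rather than the value at one fixed endpoint) into the validity condition, and it is what makes the edge statement genuinely more than a restatement of the vertex statement.
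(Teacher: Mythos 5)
Your proof is correct and follows essentially the same route as the paper's: it reduces to the observation that the $a_j$-degree of an edge $e_i$ of $X_i$ equals $\max\{D_{ij}(d_1(e_i)),D_{ij}(d_0(e_i))\}$, sums over the factors using the induced $m$-filtration on the product, and compares with Definition~\ref{def:validEdge}. The only difference is that you carefully justify the max claim (the $Pa_j$, $Va_j$, unaffected-semaphore, and degenerate cases), which the paper simply asserts as following from the definition of the grading.
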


\begin{proof}
  Consider $e=(e_1,\ldots,e_n) \in X_1 \times \cdots \times X_n$. Let $1\leq i\leq n$ and $1\leq j \leq m$. From the definition of the grading on $X_i$, $e_i$ has $a_j$ degree equal to the maximum of the $a_j$ degree of its vertices $d_1(e_i)$ and $d_0(e_i)$.
  Thus the degree of $e$ is 
  \small
  \begin{equation*}
    \left(\sum_{i=1}^n \max\{D_{i1}(d_1(e_i)),D_{i1}(d_0(e_i))\},\ldots,\sum_{i=1}^n
    \max\{D_{im}(d_1(e_i)),D_{im}(d_0(e_i))\}\right).
  \end{equation*}
  \normalsize
  Therefore $x$ is valid if and only if $x \in (X_1 \times \cdots
  \times X_n)_{(k_1,\ldots,k_m)}$.  
\end{proof}

\begin{definition} \label{def:modelProgram}
  Assume the program $P = (P_1 | \ldots | P_n)$ has shared resources
  $a_1, \ldots, a_m$ where $a_i$ is a $k_i$-semaphore.
  The \emph{simplicial model} for $P$ is given by the simplicial set 
  \begin{equation*}
    (X_1 \times \cdots \times X_n)_{(k_1, \ldots,k_m)}.
  \end{equation*}
\end{definition}

\begin{example} \label{example:swissReduced}
  Let $a$ and $b$ be two binary semaphores. Consider the two programs
  \begin{equation*}
    A = Pa.Pb.Vb.Va \quad \text{and} \quad B = Pb.Pa.Va.Vb
  \end{equation*}
  These can be modeled by the following bifiltered simplicial sets $\overline{A}$ and $\overline{B}$.
 
  \begin{center}
   \begin{tikzpicture}[thick]
      \pgfsetarrows{latex-latex}
      \pgfsetarrowsstart{}
   \foreach \x in {0,4}
         \fill (\x,0) circle (2pt);
   \foreach \x in {1,3}
         \fill (\x,0) circle (2pt) node [below=3pt] {$a$};
  \foreach \x in {2}
         \fill (\x,0) circle (2pt) node [anchor=north] {$ab$};
   \foreach \x in {0,3}
       \draw (\x,0) -- node [below=3pt] {$a$} (\x+1,0) ;
      \foreach \x in {1,2}
       \draw (\x,0) -- node [anchor=north] {$ab$} (\x+1,0) ;
  \end{tikzpicture}
  \quad \quad  \quad
   \begin{tikzpicture}[thick]
      \pgfsetarrows{latex-latex}
      \pgfsetarrowsstart{}
   \foreach \x in {0,4}
         \fill (\x,0) circle (2pt);
   \foreach \x in {1,3}
         \fill (\x,0) circle (2pt) node [anchor=north] {$b$};
  \foreach \x in {2}
         \fill (\x,0) circle (2pt) node [anchor=north] {$ab$};
   \foreach \x in {0,3}
       \draw (\x,0) -- node [anchor=north] {$b$} (\x+1,0) ;
      \foreach \x in {1,2}
       \draw (\x,0) -- node [anchor=north] {$ab$} (\x+1,0) ;
  \end{tikzpicture}
\end{center}
Here the unlabeled vertices and edges are in filtration $(0,0)$. The
vertices and edges labeled $a$, $b$ and $ab$ are in filtration
$(1,0)$, $(0,1)$ and $(1,1)$ respectively.

 Now consider the concurrent execution of $A$ and $B$, denoted $(A|B)$. It is modeled by the following simplicial set $(\overline{A} \times \overline{B})_{(1,1)}$. 

  \begin{center}
    \begin{tikzpicture}[thick]
   \pgfsetarrowsend{}
   \foreach \x in {0,3}
     \foreach \y in {0,3}
       \draw [fill=lightgray] (\x,\y) -- (\x+1,\y) -- (\x+1,\y+1) -- (\x,\y+1) -- (\x,\y);
      \pgfsetarrowsend{latex} 
     \foreach \x in {0,4} 
     \foreach \y in {0,1,2,3,4}
            \fill (\x,\y) circle (2pt);
     \foreach \x in {1,2,3} 
     \foreach \y in {0,4}
            \fill (\x,\y) circle (2pt);
     \fill (1,1) circle (2pt);
     \fill (1,3) circle (2pt);
     \fill (3,1) circle (2pt);
     \fill (3,3) circle (2pt);
   \foreach \x in {0,1,2,3}
      {
      \draw (\x,4) -- (\x+1,4);
       \draw (\x,0) -- (\x+1,0);
    }
    \foreach \y in {0,1,2,3}
    {
      \draw (0,\y) -- (0,\y+1);
      \draw (4,\y) -- (4,\y+1);
    }
    \foreach \x in {0,3}
       \foreach \y in {0,3}
         \draw (\x,\y) -- (\x+1,\y+1);
     \draw (1,0) -- (1,1);
     \draw (0,1) -- (1,1);
     \draw (3,0) -- (3,1); 
     \draw (3,1) -- (4,1);
     \draw (0,3) -- (1,3);
     \draw (1,3) -- (1,4);
     \draw (3,3) -- (3,4);
     \draw (3,3) -- (4,3);
\end{tikzpicture}
  \end{center}
\end{example}

\begin{example} \label{example:cube}
  Let $a$ be a $2$-semaphore. Consider the three identical programs,
  $A$, $B$ and $C$ given by
  \begin{equation*}
    Pa.Va
  \end{equation*}
  This program is modeled by the following filtered simplicial set $X$.
 
  \begin{center}
    \begin{tikzpicture}[thick]
      \pgfsetarrows{latex-latex}
      \pgfsetarrowsstart{}
      \foreach \x in {0,1}
       \draw (\x,0) -- node [anchor=north] {$a$} (\x+1,0) ;
     \foreach \x in {1}
         \fill (\x,0) circle (2pt) node [anchor=north] {$a$};
     \foreach \x in {0,2}
         \fill (\x,0) circle (2pt);
  \end{tikzpicture}
  \end{center}

  The concurrent program $(A|B|C)$ is modeled by the simplicial set
  $(X \times X \times X)_{(2)}$, which is a triangulation of the boundary of the cube.
\end{example}

\begin{remark}
  One can also define analogous cubical models for concurrent
  programs. The simplicial models presented here are the
  triangulations of those cubical models.
\end{remark}

\subsection{Simplicial models of execution spaces}
\label{sec:necklaces}

Next we construct simplicial models for the space of executions from one state to another in the simplicial models in Section~\ref{sec:simplicialModels}.

\begin{definition}
 For vertices $a$ and $b$ in a simplicial model $X$, we
 define the \emph{simplicial model} of the execution space from $a$ to $b$ to be the simplicial set $\mathfrak{C}(X)(a,b)$.
\end{definition}

\begin{example} \label{example:cubeNecklaces}
  Let $X$ be the boundary of the cube 
labeled as follows.
\begin{center}
  \begin{tikzpicture}[thick,scale=2]
    \pgfsetarrowsend{latex}
    \def\xdeep{0.5}
    \def\ydeep{0.25}
   \foreach \x in {0,1}
   {
     \draw (\x,0) -- (\x,1);
     \draw (\x+\xdeep,\ydeep) -- (\x+\xdeep,1+\ydeep);
     \draw (\x,0) -- (\x+\xdeep,1+\ydeep);
     \foreach \y in {0,1}
      \draw (\x,\y) -- (\x+\xdeep,\y+\ydeep);
 }
  \foreach \y in {0,1}
  {
    \draw (0,\y) -- (1,\y);
    \draw (\xdeep,\y+\ydeep) -- (1+\xdeep,\y+\ydeep);
    \draw (0,\y) -- (1+\xdeep,\y+\ydeep);
 }
  \draw (0,0) -- (1,1);
  \draw (\xdeep,\ydeep) -- (1+\xdeep, 1+\ydeep);
  \fill (0,0) circle (1pt) node [anchor=north east] {$0$};
  \fill (1,0) circle (1pt) node [anchor=north west] {$1$};
  \fill (\xdeep,\ydeep) circle (1pt) node [above right=-0.25ex and 1.5ex] {$2$};
 \fill (0,1) circle (1pt) node [anchor=south east] {$3$};
  \fill (1+\xdeep,\ydeep) circle (1pt) node [anchor=north west] {$4$};
  \fill (1,1) circle (1pt) node [below left=-0.4ex and 1.5ex] {$5$};
  \fill (\xdeep,1+\ydeep) circle (1pt) node [anchor=south east] {$6$};
  \fill (1+\xdeep,1+\ydeep) circle (1pt) node [anchor=south west] {$7$};
\end{tikzpicture}
\end{center}

Then $\Map_X(0,7)$ is given by the following simplicial set.

\begin{center}
 \begin{tikzpicture}[thick,scale=1]
    \pgfmathparse{cos(90)}
     \let\xa\pgfmathresult
     \pgfmathparse{sin(90)}
     \let\ya\pgfmathresult
     \pgfmathparse{cos(150)}
     \let\xd\pgfmathresult
     \pgfmathparse{sin(150)}
     \let\yd\pgfmathresult
     \pgfmathparse{cos(210)}
     \let\xb\pgfmathresult
     \pgfmathparse{sin(210)}
     \let\yb\pgfmathresult
     \pgfmathparse{cos(270)}
     \let\xf\pgfmathresult
     \pgfmathparse{sin(270)}
     \let\yf\pgfmathresult
     \pgfmathparse{cos(330)}
     \let\xc\pgfmathresult
     \pgfmathparse{sin(330)}
     \let\yc\pgfmathresult
     \pgfmathparse{cos(30)}
     \let\xe\pgfmathresult
     \pgfmathparse{sin(30)}
     \let\ye\pgfmathresult
    \fill (\xa, \ya) circle (2pt) node [anchor=south] {$1$};
    \fill (\xb, \yb) circle (2pt) node [anchor=east] {$2$};
    \fill (\xc, \yc) circle (2pt) node [anchor=west] {$3$};
    \fill (\xd, \yd) circle (2pt) node [anchor=east] {$4$};
    \fill (\xe, \ye) circle (2pt) node [anchor=west] {$5$};
    \fill (\xf, \yf) circle (2pt) node [anchor=north] {$6$};
    \draw (\xa,\ya) -- (\xd,\yd) -- (\xb,\yb) -- (\xf,\yf) -- (\xc,\yc) -- (\xe,\ye) -- (\xa,\ya);
    \fill (\xa/2+\xd/2,\ya/2+\yd/2) circle (2pt) node  [anchor=south east] {$14$};
    \fill (\xb/2+\xd/2,\yb/2+\yd/2) circle (2pt) node  [anchor=east] {$24$};
    \fill (\xb/2+\xf/2,\yb/2+\yf/2) circle (2pt) node  [anchor=north east] {$26$};
    \fill (\xf/2+\xc/2,\yf/2+\yc/2) circle (2pt) node [anchor=north west] {$36$};
    \fill (\xe/2+\xc/2,\ye/2+\yc/2) circle (2pt) node [anchor=west] {$35$};
    \fill (\xe/2+\xa/2,\ye/2+\ya/2) circle (2pt) node [anchor=south west] {$15$};
\end{tikzpicture}
\end{center}
Here a vertex labeled $i$ represents the necklace with flanked flag $(\Delta^1 \vee \Delta^1, \{0,i,7\})$ and a vertex labeled $ij$  represents the necklace with flanked flag $(\Delta^1 \vee \Delta^1 \vee \Delta^1, \{0,i,j,7\})$. An edge between vertices $i$ and $ij$ represents the necklace with flanked flag $( \Delta^2 \vee \Delta^1, \{0,i,7\} \subset \{0,i,j,7\})$ or 
$( \Delta^1 \vee \Delta^2, \{0,i,7\} \subset \{0,i,j,7\})$.

Thus $\Map_X(0,7)$ is homotopy equivalent to the circle. We remark that this is an example of higher order structure that is not detected by the path category~\cite{jardine:pathCategories}.
\end{example}

\begin{example} \label{example:petriNet}
 Let $X$ be the following simplicial set. For clarity, we omit the $1$-simplices along the diagonals of each of the squares.
\begin{center}
  \begin{tikzpicture}[thick,scale=2]
    \pgfsetarrowsend{latex}
    \def\xdeep{0.5}
    \def\ydeep{0.25}
   \foreach \x in {0,1,2}
   {
     \draw (\x,0) -- (\x,1);
     \draw (\x+\xdeep,\ydeep) -- (\x+\xdeep,1+\ydeep);
     \foreach \y in {0,1}
      \draw (\x,\y) -- (\x+\xdeep,\y+\ydeep);
 }
  \foreach \x in {0,1}
  \foreach \y in {0,1}
  {
    \draw (\x,\y) -- (\x+1,\y);
    \draw (\x+\xdeep,\y+\ydeep) -- (\x+1+\xdeep,\y+\ydeep);
 }
  \fill (0,0) circle (1pt) node [anchor=north east] {$a$};
  \fill (1,0) circle (1pt) node [anchor=north west] {$b$};
  \fill (\xdeep,\ydeep) circle (1pt) node [anchor=south east] {$c$};
 \fill (0,1) circle (1pt) node [anchor=south east] {$d$};
 \fill (2,0) circle (1pt) node [anchor=north west] {$e$};
\fill (1+\xdeep,\ydeep) circle (1pt) node [anchor=south west] {$f$};
  \fill (1,1) circle (1pt) node [anchor=north west] {$g$};
 \fill (\xdeep,1+\ydeep) circle (1pt) node [anchor=south east] {$h$};
\fill (2+\xdeep,\ydeep) circle (1pt) node [anchor=north west] {$i$};
  \fill (2,1) circle (1pt) node [anchor=north west] {$j$};
 \fill (1+\xdeep,1+\ydeep) circle (1pt) node [anchor=south west] {$k$};
 \fill (2+\xdeep,1+\ydeep) circle (1pt) node [anchor=south west] {$l$};
\end{tikzpicture}
\end{center}

Then $\Map_X(a,l)$ is given by the following simplicial set.

\begin{center}
 \begin{tikzpicture}[thick,scale=2]
    \pgfmathparse{cos(90)}
     \let\xa\pgfmathresult
     \pgfmathparse{sin(90)}
     \let\ya\pgfmathresult
     \pgfmathparse{cos(150)}
     \let\xd\pgfmathresult
     \pgfmathparse{sin(150)}
     \let\yd\pgfmathresult
     \pgfmathparse{cos(210)}
     \let\xb\pgfmathresult
     \pgfmathparse{sin(210)}
     \let\yb\pgfmathresult
     \pgfmathparse{cos(270)}
     \let\xf\pgfmathresult
     \pgfmathparse{sin(270)}
     \let\yf\pgfmathresult
     \pgfmathparse{cos(330)}
     \let\xc\pgfmathresult
     \pgfmathparse{sin(330)}
     \let\yc\pgfmathresult
     \pgfmathparse{cos(30)}
     \let\xe\pgfmathresult
     \pgfmathparse{sin(30)}
     \let\ye\pgfmathresult
    \foreach \y in {-\ya,\ya}
    {
      \draw [fill=lightgray] (0,\y-\ya+\ye) -- (\xe,\y) -- (0,\y+\ya-\ye) -- (-\xe,\y) -- (0,\y-\ya+\ye);
      \draw (0,\y+\ya-\ye) -- (0,\y-\ya+\ye);
      \draw (-\xe,\y) -- (\xe,\y);
      \draw (-\xe/2,\y-\ye/2) -- (\xe/2,\y+\ye/2);
      \draw (-\xe/2,\y+\ye/2) -- (\xe/2,\y-\ye/2);
   }
     \foreach \x in {-\xe,\xe}
     {
   \draw (\x+\xa,\ya) -- (\x+\xd,\yd) -- (\x+\xb,\yb) -- (\x+\xf,\yf) -- (\x+\xc,\yc) -- (\x+\xe,\ye) -- (\x+\xa,\ya);
   }
    \foreach \x in {-\xe}
    {
    \fill (\x+\xa, \ya) circle (1pt) node [anchor=south east] {$cfk$};
    \fill (\x+\xb, \yb) circle (1pt) node [anchor=north east] {$dhk$};
    \fill (\x+\xc, \yc) circle (1pt) node [anchor=south west] {$bgk$};
    \fill (\x+\xd, \yd) circle (1pt) node [anchor=south east] {$chk$};
    \fill (\x+\xe, \ye) circle (1pt) node [anchor=north west] {$bfk$};
    \fill (\x+\xf, \yf) circle (1pt) node [anchor=north east] {$dgk$};
    \fill (\x+\xa/2+\xd/2,\ya/2+\yd/2) circle (1pt) node  [anchor=south east] {$ck$};
    \fill (\x+\xb/2+\xd/2,\yb/2+\yd/2) circle (1pt) node  [anchor=east] {$hk$};
    \fill (\x+\xb/2+\xf/2,\yb/2+\yf/2) circle (1pt) node  [anchor=north east] {$dk$};
    \fill (\x+\xf/2+\xc/2,\yf/2+\yc/2) circle (1pt) node [anchor=south east] {$gk$};
    \fill (\x+\xe/2+\xc/2,\ye/2+\yc/2) circle (1pt) node [anchor=west] {$bk$};
    \fill (\x+\xe/2+\xa/2,\ye/2+\ya/2) circle (1pt) node [anchor=north east] {$fk$};
    }
    \foreach \x in {\xe}
    {
    \fill (\x+\xa, \ya) circle (1pt) node [anchor=south west] {$bfi$};
   \fill (\x+\xc, \yc) circle (1pt) node [anchor=west] {$bej$};
   \fill (\x+\xe, \ye) circle (1pt) node [anchor=west] {$bei$};
    \fill (\x+\xf, \yf) circle (1pt) node [anchor=north west] {$bgj$};
    \fill (\x+\xa/2+\xd/2,\ya/2+\yd/2) circle (1pt) node  [anchor=north west] {$bf$};
   \fill (\x+\xb/2+\xf/2,\yb/2+\yf/2) circle (1pt) node  [anchor=south west] {$bg$};
    \fill (\x+\xf/2+\xc/2,\yf/2+\yc/2) circle (1pt) node [anchor=north west] {$bj$};
    \fill (\x+\xe/2+\xc/2,\ye/2+\yc/2) circle (1pt) node [anchor=west] {$be$};
    \fill (\x+\xe/2+\xa/2,\ye/2+\ya/2) circle (1pt) node [anchor=south west] {$bi$};
    }
  \foreach \y in {-\ya}
   {
     \fill (-\xe/2,\y-\ye/2) circle (1pt) node [anchor=north east] {$dg$};
     \fill (0,\y-\ye) circle (1pt) node [anchor=north] {$dgj$};
     \fill (\xe/2,\y-\ye/2) circle (1pt) node [anchor=north west] {$gj$};
     \fill (0,\y) circle (1pt) node [above left=0.5ex and -0.5ex] {$g$};
}
   \foreach \y in {\ya}
   {
     \fill (-\xe/2,\y+\ye/2) circle (1pt) node [anchor=south east] {$cf$};
     \fill (0,\y+\ye) circle (1pt) node [anchor=south] {$cfi$};
     \fill (\xe/2,\y+\ye/2) circle (1pt) node [anchor=south west] {$fi$};
     \fill (0,\y) circle (1pt) node [below left=0.5ex and 0ex] {$f$};
 }
\end{tikzpicture}
\end{center}
Here 
a vertex labeled $x$ represents the necklace with flanked flag $(\Delta^1 \vee \Delta^1, \{a,x,l\})$, 
a vertex labeled $xy$ represents the necklace with flanked flag $(\Delta^1 \vee \Delta^1 \vee \Delta^1, \{a,x,y,l\})$, and
a vertex labeled $xyz$ represents the necklace with flanked flag $(\Delta^1 \vee \Delta^1 \vee \Delta^1 \vee \Delta^1, \{a,x,y,z,l\})$. 
A triangle between vertices $x$, $xy$ and $xyz$ represents the necklace with flanked flag $( \Delta^2 \vee \Delta^2, \{a,x,l\} \subset \{a,x,y,l\} \subset \{a,x,y,z,l\})$.

Thus $\Map_X(a,l)$ is homotopy equivalent to $S^1 \vee S^1$, the wedge of two circles. 
\end{example}

\section{Future directions}
\label{sec:futureWork}

It would be very nice to have a Quillen model structure on simplicial categories appropriate to their use as models for concurrency. A natural candidate is J.\ Bergner's model structure on simplicial categories~\cite{bergner:scat}, which is closely related to A.\ Joyal's quasi-category model structure on simplicial sets~\cite{joyal:qcatAndKan}. However this model structure is too strong, since the weak equivalences induce equivalences of path categories. Thus a weaker notion of equivalence is needed. For some recent ideas in this direction, see~\cite{krishnan:cubicalApproximation}. 

The simplicial set models in this paper are in fact triangulations of
cubical set models. When constructing models of the execution spaces
from one state to another, we used (simplicial) necklaces on the
simplicial models. Instead, one could use cubical necklaces on the
cubical models, if one understood such things from a homotopy-theoretic
point of view. These models would be more economical. For example, we should be able to redo Example~\ref{example:petriNet} using cubical sets and cubical necklaces to obtain the following cubical set for $\Map_X(a,l)$.

\begin{center}
 \begin{tikzpicture}[thick,scale=1]
    \pgfmathparse{cos(90)}
     \let\xa\pgfmathresult
     \pgfmathparse{sin(90)}
     \let\ya\pgfmathresult
     \pgfmathparse{cos(150)}
     \let\xd\pgfmathresult
     \pgfmathparse{sin(150)}
     \let\yd\pgfmathresult
     \pgfmathparse{cos(210)}
     \let\xb\pgfmathresult
     \pgfmathparse{sin(210)}
     \let\yb\pgfmathresult
     \pgfmathparse{cos(270)}
     \let\xf\pgfmathresult
     \pgfmathparse{sin(270)}
     \let\yf\pgfmathresult
     \pgfmathparse{cos(330)}
     \let\xc\pgfmathresult
     \pgfmathparse{sin(330)}
     \let\yc\pgfmathresult
     \pgfmathparse{cos(30)}
     \let\xe\pgfmathresult
     \pgfmathparse{sin(30)}
     \let\ye\pgfmathresult
    \foreach \y in {-\ya,\ya}
    {
      \draw [fill=lightgray] (0,\y-\ya+\ye) -- (\xe,\y) -- (0,\y+\ya-\ye) -- (-\xe,\y) -- (0,\y-\ya+\ye);
  }
     \foreach \x in {-\xe,\xe}
     {
   \draw (\x+\xa,\ya) -- (\x+\xd,\yd) -- (\x+\xb,\yb) -- (\x+\xf,\yf) -- (\x+\xc,\yc) -- (\x+\xe,\ye) -- (\x+\xa,\ya);
   }
    \foreach \x in {-\xe}
    {
    \fill (\x+\xa, \ya) circle (2pt); 
   \fill (\x+\xb, \yb) circle (2pt); 
    \fill (\x+\xc, \yc) circle (2pt); 
    \fill (\x+\xd, \yd) circle (2pt); 
    \fill (\x+\xe, \ye) circle (2pt); 
    \fill (\x+\xf, \yf) circle (2pt); 
   }
    \foreach \x in {\xe}
    {
    \fill (\x+\xa, \ya) circle (2pt); 
   \fill (\x+\xc, \yc) circle (2pt); 
   \fill (\x+\xe, \ye) circle (2pt); 
    \fill (\x+\xf, \yf) circle (2pt); 
   }
  \foreach \y in {-\ya}
   {
    \fill (0,\y-\ye) circle (2pt); 
}
   \foreach \y in {\ya}
   {
    \fill (0,\y+\ye) circle (2pt); 
}
\end{tikzpicture}
\end{center}

\section*{Acknowledgments}

The author would like to thank David Spivak for many useful conversations pertaining to this article. 
He would also like to thank the NSF for funding some of this research through the Midwest Topology Network (NSF grant DMS-0844249). 
In addition he would like to thank Martin Raussen and Lisbeth Fajstrup for organizing the GETCO 2010 conference that motivated this work.


\end{document}